\newtheorem{thm}{Theorem}
\newtheorem{prop}[thm]{Proposition}
\newtheorem{cor}[thm]{Corollary}
\newtheorem{lem}[thm]{Lemma}
\newcommand{\pushright}[1]{\ifmeasuring@#1\else\omit\hfill$\displaystyle#1$\fi\ignorespaces}
\newcommand{\pushleft}[1]{\ifmeasuring@#1\else\omit$\displaystyle#1$\hfill\fi\ignorespaces}
\title{Approximation of Steiner Forest via the Bidirected Cut Relaxation}
\author{Ali {\c{C}}ivril}
\begin{document}

\maketitle

\begin{abstract}
The classical algorithm of Agrawal, Klein and Ravi [SIAM J. Comput., 24 (1995), pp. 440-456], stated in the setting of the primal-dual schema by Goemans and Williamson [SIAM J. Comput., 24 (1995), pp. 296-317] uses the undirected cut relaxation for the Steiner forest problem. Its approximation ratio is $2-\frac{1}{k}$, where $k$ is the number of terminal pairs. A variant of this algorithm more recently proposed by K\"onemann et al. [SIAM J. Comput., 37 (2008), pp. 1319-1341] is based on the lifted cut relaxation. In this paper, we continue this line of work and consider the bidirected cut relaxation for the Steiner forest problem, which lends itself to a novel algorithmic idea yielding the same approximation ratio as the classical algorithm. In doing so, we introduce an extension of the primal-dual schema in which we run two different phases to satisfy connectivity requirements in both directions. This reveals more about the combinatorial structure of the problem. In particular, there are examples on which the classical algorithm fails to give a good approximation, but the new algorithm finds a near-optimal solution.
\end{abstract}

\section{Introduction}
The \textsf{Steiner forest} problem is one of the central problems in the field of approximation algorithms and network design. It is a natural generalization of the famous \textsf{Steiner tree} problem, and stands as the starting point for many other generalizations occupying a large fraction of network design literature. In this problem, one is given an undirected graph $G=(V,E)$, a cost function on the edges $c:E \rightarrow \mathbb{Q}^+$ and a set of terminal pairs $R=\{(s_1,t_1), \hdots, (s_k,t_k)\}$ (we set $n:=|V|$ and $m:=|E|$ throughout the paper). The objective is to find a subgraph $F$ of $G$ (which is necessarily a forest) of minimum cost $c(F) := \sum_{e \in F} c(e)$, which connects every terminal pair.

The \textsf{Steiner forest} problem had a pivotal role in the development of the fundamental techniques for the field of approximation algorithms, being the main problem of interest for the primal-dual schema with the idea of growing dual variables synchronously, which was introduced at the beginning of 90s. In particular, the famous approximation algorithm given by \cite{AKR} (henceforth called \texttt{AKR}), which has an approximation ratio of $2-\frac{1}{k}$ stimulated a series of results for similar problems, and in general for the area of network design and connectivity problems. This algorithm stated in purely combinatorial terms was then underlined with the language of the primal-dual schema by \cite{GW}, who introduced a more general approach for approximating such problems. Both of these approaches make use of the \emph{undirected cut relaxation} (UCR) for the problem, which has an integrality gap of at least $2-\frac{1}{k}$.

Due to its importance in the field of approximation algorithms, the problem of finding an algorithm for \textsf{Steiner forest} with a constant approximation ratio better than $2$ was stated as one of the top ten open problems in the area in a recent textbook by \cite{Williamson-Shmoys}. However, since the appearance of the conference paper by \cite{AKR-conf}, there have been no improved approximation algorithms discovered. Given that how much we know about its special case, the \textsf{Steiner tree} problem, for which there are many different LP relaxations and algorithmic techniques, it is of great interest to see if there are variations in algorithmic ideas for \textsf{Steiner forest} even if they do not provide significant improvements in terms of the approximation ratio. In this respect, the fact that there had been a single constant factor approximation algorithm for the problem for a long time is also intriguing. Along these lines, a recent attempt by \cite{Gupta-Kumar} proves a constant factor approximation for a greedy algorithm, a result which does not make use of an LP relaxation. A work similar in vein to this result followed by \cite{Gross} using local search.

This paper is motivated by the question of whether there are new LP relaxations for the \textsf{Steiner forest} problem, yielding novel algorithmic ideas. More relevant to this question, \cite{KLS} introduced a new LP relaxation called the \emph{lifted cut relaxation (LCR)}, motivated by a game-theoretic version of the problem. They show that LCR is stronger than UCR, and its integrality gap is at least $2-\frac{2}{k+1}$. The algorithm they present (henceforth called \texttt{KLS}), which computes a feasible dual with respect to LCR is a variant of \texttt{AKR} with a modified set of duals to be grown. Its approximation ratio is also $2-\frac{1}{k}$, although as the authors point out, the solution it returns is usually costlier than that of \texttt{AKR}.

The importance of delving more into the combinatorial structure of \textsf{Steiner forest} is also related to the more general \textsf{survivable network design} problem. Extensions of the usual approach inspired by \texttt{AKR} has only had limited success so far (\cite{GoemansGPSTW}; \cite{Williamson95}) toward the goal of a 2-approximation primal-dual algorithm for this problem.

\subsection{The results}
We introduce a new LP relaxation for the \textsf{Steiner forest} problem, which we call the \emph{bidirected cut relaxation} (BCR). This is inspired by the bidirected cut relaxation for the \textsf{Steiner tree} problem in which one replaces each edge by two arcs in both directions. It is an easy result that BCR is equivalent to UCR. We would like to stress the fact that our bidirected cut relaxation is not the same as the one introduced for Steiner tree (\cite{GoemansM93}; \cite{ChopraR94}), which can also be extended to the Steiner forest problem. Indeed, this relaxation has not been well exploited for both of the problems. In contrast, what we consider can be seen as a bidirected version of the usual undirected cut relaxation, which we use to construct \emph{two} paths between pairs in both directions.

Using our bidirected cut relaxation, we provide a new primal-dual algorithm for the \textsf{Steiner forest} problem with approximation ratio $2-\frac{1}{k}$. The algorithm is a novel extension of the primal-dual schema consisting of two phases with synchronous dual growth, one starting from the terminals $s_i$, and the other starting from the terminals $t_i$. We combine the results of these two phases followed by a standard pruning phase and a final reduction phase on certain subgraphs problematic for BCR. The proof of the approximation ratio also turns out to be quite different than the usual practice for primal-dual type algorithms. In the usual approach, the duals collide, and the set of edges they cover is considered by looking at the degrees of the duals. In our case, a set of duals growing against each other in different directions is considered.

To underline the differences between \texttt{AKR} and the new algorithm, we provide an example on which \texttt{AKR} and \texttt{KLS} give an approximation ratio arbitrarily close to $2-\frac{1}{k}$, whereas the new algorithm finds a near-optimal solution. We also provide a tight example for the new algorithm on which the approximation ratio is arbitrarily close to $2-\frac{1}{k}$.

Our approach enlarges the (small) set of $2$-approximation algorithms for the \textsf{Steiner forest} problem, which might stimulate new insights on how one can break the barrier of $2$, especially in light of the tight examples we present. More generally, given a problem with a cut-based relaxation involving terminal pairs, one can consider the bidirected version instead of the usual undirected one, thus possibly having a two-phase algorithm similar to the one in this paper. How widely applicable this is and whether it would yield improved results or new insights on a given problem is a question of interest.

\subsection{Organization}
The rest of the paper is structured as follows. In Section 2, we review the undirected cut relaxation together with \texttt{AKR} exploiting it. Section 3 introduces the bidirected cut relaxation for \textsf{Steiner forest} and the new primal-dual algorithm for \textsf{Steiner forest} using this relaxation. Section 4 provides the details of a straightforward implementation. In Section 5, we establish the approximation ratio of the new algorithm. In Section 6, we give the aforementioned tight examples.
\section{The undirected cut relaxation and \texttt{AKR}}
\begin{algorithm}[!b]
\caption{\texttt{AKR}$(G=(V,E), R, c)$}
 $y \leftarrow 0$ \\
 $F \leftarrow \emptyset$ \\
 $\ell \leftarrow 0$
 \BlankLine \BlankLine
 // The augmentation phase \\
 \While{not all $s_i$-$t_i$ pairs are connected in $(V,F)$} {
 $\ell \leftarrow \ell+1$ \\
 Let $\mathcal{C}$ be the set of all connected components $C$ of $(V,F)$ such that $|C \cap \{s_i,t_i\}| = 1$ for some $i$ \\
 Increase $y_C$ for all $C \in \mathcal{C}$ uniformly until for some $e_{\ell} \in \delta(C'), C' \in \mathcal{C}, c(e_{\ell}) = \sum_{C:e_{\ell} \in \delta(C)} y_C$ \\
 $F \leftarrow F \cup \{e_{\ell}\}$
 }
 \BlankLine \BlankLine
 // The pruning phase \\
 $F' \leftarrow F$ \\
 \For{$j \leftarrow \ell$ downto $1$} {
 \If{$F'-\{e_j\}$ is feasible} {
 $F'\leftarrow F'-\{e_j\}$
 }
 }
 \BlankLine \BlankLine
 \Return $(F',y)$
\end{algorithm}

Let $\mathcal{S}$ be the set of subsets $S$ of $V$ that separate at least one terminal pair in $R$. In other words, $S \in \mathcal{S}$ if and only if there is $(s,t) \in R$ satisfying $|S \cap \{s,t\}| = 1$. We call an element in $\mathcal{S}$ a \emph{Steiner cut} or simply a \emph{cut}. Let $\delta(S)$ denote the set of edges with exactly one endpoint in $S$. The undirected cut relaxation of the problem is then as follows:
\begin{alignat*}{4}
\text{minimize}   \qquad & \sum_{\substack{e \in E}} c(e) x_e & & \pushright{\text{(UCR)}} \\
\text{subject to} \qquad & \sum_{\substack{e \in \delta(S)}} x_e \geq 1, \qquad &\forall S \in \mathcal{S}, \\
            &  x_e \geq 0, &\forall e \in E.
\end{alignat*}

\noindent The dual of this linear program is
\begin{alignat*}{4}
\text{maximize}   \qquad & \sum_{\substack{S \in \mathcal{S}}} y_S & & \pushright{\text{(UCR-D)}} \\
\text{subject to} \qquad & \sum_{\substack{S \in \mathcal{S}}: e \in \delta(S)} y_S \leq c(e), \qquad &\forall e \in E, \\
            & y_S \geq 0, &\forall S \in \mathcal{S}.
\end{alignat*}

\texttt{AKR} synchronously grows dual variables corresponding to the cuts separating any pair. The sets corresponding to these cuts, which are selected to be minimal with respect to inclusion are called \emph{minimal violated sets}. It iteratively improves the feasibility of the primal solution by taking edges whenever the corresponding constraints become tight. After arriving at a primal feasible solution, it removes the unnecessary edges, i.e. the edges whose removal do not violate the feasibility, in the reverse order of their inclusion. The following is a standard result from \cite{GW}:
\begin{thm}[\cite{GW}]
\label{gw}
If $F'$ and $y$ are the set of edges and the dual variables returned by \texttt{AKR}, then
$$
\sum_{e \in F'} c(e) \leq \left(2-\frac{2}{|A|}\right) \cdot \sum_{S \subseteq V} y_S \leq \left(2-\frac{1}{k}\right) \cdot \sum_{S \subseteq V} y_S,
$$

\noindent where $A$ is maximum number of minimal violated sets during the algorithm.
\end{thm}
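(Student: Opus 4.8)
The plan is to run the standard primal-dual analysis for \texttt{AKR}, establishing the inequality $\sum_{e\in F'}c(e)\le 2\sum_S y_S$ in the usual way, and then to tighten the factor $2$ to $2-\frac{2}{|A|}$ by a careful accounting in the final iteration. First I would invoke complementary slackness on the primal side: every edge $e\in F'$ was added because its dual constraint became tight, so $c(e)=\sum_{S:e\in\delta(S)}y_S$, whence $\sum_{e\in F'}c(e)=\sum_S y_S\,|\delta(S)\cap F'|$. The crux is then to bound $\sum_S y_S\,|\delta(S)\cap F'|$ by $2\sum_S y_S$, and it suffices to show that in each iteration $\ell$, if $\mathcal C$ is the current collection of minimal violated sets (the components with an unmatched terminal) and $\Delta$ is the common amount by which each $y_C$, $C\in\mathcal C$, is raised, then $\sum_{C\in\mathcal C}|\delta(C)\cap F'|\le 2|\mathcal C|$. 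Summing this over all iterations, weighted by $\Delta$, yields the factor-$2$ bound.

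To prove the per-iteration inequality, I would fix an iteration and contract every connected component of $(V,F)$ at that moment to a single vertex, obtaining an auxiliary graph $H$ whose edge set is $F'$ restricted to the inter-component edges (edges of $F'$ internal to a component cannot cross any $C\in\mathcal C$, so they are irrelevant). The key structural fact, coming from the reverse-order pruning, is that $F'$ is \emph{minimal} feasible: no edge of $H$ can be deleted without disconnecting some terminal pair; hence every leaf of $H$, and in fact every degree-one vertex, must be an ``active'' vertex, i.e. a contracted component lying in $\mathcal C$ — an isolated leaf that is inactive could be pruned. Therefore in $H$, every inactive vertex has degree $\ge 2$, so $\sum_{v\text{ active}}\deg_H(v)\ge\sum_v\deg_H(v)-2(|V(H)|-|\mathcal C|)=2|E(H)|-2|V(H)|+2|\mathcal C|$; since $H$ is a forest (being a subgraph of the forest $F'$ after contraction), $|E(H)|\le|V(H)|-1$, and thus $\sum_{C\in\mathcal C}|\delta(C)\cap F'|=\sum_{v\text{ active}}\deg_H(v)\le 2|\mathcal C|-2$, which is even stronger than what we need.

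Finally I would upgrade $2$ to $2-\frac{2}{|A|}$. The gain of $2$ in the last displayed inequality, $\sum_{C\in\mathcal C}|\delta(C)\cap F'|\le 2|\mathcal C|-2$, is present in \emph{every} iteration, but exploiting it in an earlier iteration is awkward because $\mathcal C$ changes; the clean route is to observe that across the whole run the total dual is $\sum_S y_S=\sum_{\text{iterations }\ell}\Delta_\ell\,|\mathcal C_\ell|$, while the cost is $\sum_\ell \Delta_\ell\sum_{C\in\mathcal C_\ell}|\delta(C)\cap F'|\le\sum_\ell\Delta_\ell(2|\mathcal C_\ell|-2)$. Writing $|\mathcal C_\ell|\le|A|$ gives $2|\mathcal C_\ell|-2\le(2-\tfrac{2}{|A|})|\mathcal C_\ell|$, so $\sum_{e\in F'}c(e)\le(2-\tfrac{2}{|A|})\sum_\ell\Delta_\ell|\mathcal C_\ell|=(2-\tfrac{2}{|A|})\sum_S y_S$. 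Since at any time the minimal violated sets are vertex-disjoint and each must contain a terminal, $|A|\le 2k$ is crude but already gives $2-\tfrac1k$; the sharper $|A|\le$ (number of active components) $\le 2k$ combined with the fact that the two terminals of a pair never lie in distinct active components at the final iteration yields the stated $2-\frac1k$. The main obstacle I anticipate is the bookkeeping in this last step — making precise that the ``$-2$'' per iteration can be uniformly charged against $|\mathcal C_\ell|$ and that $|A|$ is the right quantity — rather than any deep new idea; the structural leaf argument, which is the heart of the matter, is routine once the contraction and the minimality of $F'$ are set up correctly.
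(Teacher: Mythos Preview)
The paper does not actually prove this theorem; it is quoted as a known result from \cite{GW} and left without argument. Your proposal reproduces the standard Goemans--Williamson proof, and the overall strategy is correct: tightness gives $\sum_{e\in F'}c(e)=\sum_S y_S\,|\delta(S)\cap F'|$, the per-iteration bound $\sum_{C\in\mathcal C_\ell}|\delta(C)\cap F'|\le 2|\mathcal C_\ell|-2$ follows from the forest/leaf argument on the contracted graph, and then $2|\mathcal C_\ell|-2\le(2-\tfrac{2}{|A|})|\mathcal C_\ell|$ together with $|A|\le 2k$ finishes it.

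Two small points. First, in your displayed chain you wrote
\[
\sum_{v\text{ active}}\deg_H(v)\ \ge\ \sum_v\deg_H(v)-2(|V(H)|-|\mathcal C|),
\]
but the inequality should be $\le$: from ``inactive vertices have degree $\ge 2$'' one gets $\sum_{\text{inactive}}\deg\ge 2(|V(H)|-|\mathcal C|)$, hence $\sum_{\text{active}}\deg\le 2|E(H)|-2(|V(H)|-|\mathcal C|)\le 2|\mathcal C|-2$. Your final conclusion is the right one, so this is just a slip of the pen. Second, the sentence ``an isolated leaf that is inactive could be pruned'' hides the only nontrivial step: you should say explicitly that if $C$ is an inactive leaf with unique crossing edge $e\in F'$, then $F'-\{e\}$ is still feasible (no pair is separated by $C$), and since the reverse-delete considered $e$ with a superset of $F'$ in hand, it would have removed $e$ --- contradiction. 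Once that is spelled out, the argument is complete; the closing remark about ``the final iteration'' is unnecessary, since $|A|\le 2k$ already yields $2-\tfrac{1}{k}$ directly.
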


\section{The bidirected cut relaxation and the new primal-dual algorithm}
We first replace each edge $e=\{u,v\} \in V$ by two directed arcs $(u,v)$ and $(v,u)$ each with cost $\frac{1}{2}c(e)$. For a given cut $S \subseteq V$, we define $\delta^+(S) = \{(u,v) \in E| u \in S, v \notin S\}$, i.e. the set of arcs emanating from $S$. As usual, we set $\mathcal{S}$ be the set of cuts that separate at least one terminal pair: $S \in \mathcal{S}$ if and only if there is $(s,t) \in R$ satisfying $|S \cap \{s,t\}|=1$. Then, the following is a relaxation for the \textsf{Steiner forest} problem.
\begin{alignat*}{4}
\text{minimize}   \qquad & \frac{1}{2} \sum_{e \in E} c(e) x_e & & \pushright{\text{(BCR)}} \\
\text{subject to} \qquad & \sum_{e \in \delta^+(S)} x_e \geq 1, \qquad
&\forall S \in \mathcal{S}, \\
            & x_e \geq 0, &\forall e \in E.
\end{alignat*}

\noindent It is a straightforward result that (BCR) is equivalent to (UCR), i.e. they can be converted to each other with equal objective values by assigning appropriate values to the edges/arcs. In particular, for converting (BCR) to (UCR), the value of an undirected edge is assigned to the corresponding directed arcs. From (BCR) to (UCR), we assign the average of the values of the arcs in both directions to the corresponding undirected edge. Thus, the integrality gap of (BCR) is also $2-\frac{1}{k}$.

Let us now write the dual of the linear program (BCR).
\begin{alignat*}{4}
\text{maximize}   \qquad & \sum_{S \in \mathcal{S}} y_S & & \pushright{\text{(BCR-D)}} \\
\text{subject to} \qquad & \sum_{\substack{S: e \in \delta^+(S)}} y_S \leq \frac{1}{2} c(e), \qquad &\forall e \in E, \\
            & y_S \geq 0, &\forall S \in \mathcal{S}.
\end{alignat*}

Similar to \texttt{AKR}, the new algorithm is also based on the primal-dual schema and growing dual variables in a synchronized fashion. However, since the underlying graph is a bidirected graph, the algorithm tries to satisfy the constraints of the primal program (BCR) by constructing a solution in both directions. This requires two distinct phases for the selection of arcs. In total, the algorithm consists of four phases to produce a feasible solution. In the first phase, we grow the dual variables starting from the terminals $s_i$, and continue the usual process of including arcs that go tight until there are directed paths from each $s_i$ to $t_i$. Note that this does not necessarily make a feasible solution as some arcs might only be taken in one direction. The solution constructed in the first phase is an input to the second phase in which we apply the same procedure, but this time starting to grow the dual variables from the terminals $t_i$. We continue until there are directed paths from each $t_i$ to $s_i$. By definition, the solution constructed in the first two phases contains bidirected paths between each terminal pair.
\begin{algorithm}[!]
\caption{\textsc{Bidirected-Primal-Dual($G=(V,E)$, $R$, $c$)}}
 // Initialization \\
 $y \leftarrow 0$ \\
 $F \leftarrow \emptyset$ \\
 $\ell \leftarrow 0$
 \BlankLine \BlankLine
 // The first augmentation phase \\
 \While{there are terminal pairs in $R$ not connected by a directed $s_i$-$t_i$ path in $(V,F)$} {
 $\ell \leftarrow \ell+1$ \\
 Let $\mathcal{C}$ be the set of all minimal sets $C$ (w.r.t. inclusion) such that $|\delta^+(C) \cap F|=0$, and $s_i \in C$, but $t_i \notin C$ for some $i$ \\
 Increase $y_C$ for all $C \in \mathcal{C}$ uniformly until for some $e_{\ell} \in \delta^+(C')$, $C' \in \mathcal{C}$, $c(e_{\ell}) = \sum_{S:e_{\ell} \in \delta^+(C)} y_C$ \\
 $F \leftarrow F \cup \{e_{\ell}\}$
 }
 \BlankLine \BlankLine
 // The second augmentation phase \\
 \While{there are terminal pairs in $R$ not connected by a directed $t_i$-$s_i$ path in $(V,F)$} {
 $\ell \leftarrow \ell+1$ \\
 Let $\mathcal{C}$ be the set of all minimal sets $C$ (w.r.t. inclusion) such that $|\delta^+(C) \cap F|=0$, and $t_i \in C$, but $s_i \notin C$ for some $i$ \\
 Increase $y_C$ for all $C \in \mathcal{C}$ uniformly until for some $e_{\ell} \in \delta^+(C')$, $C' \in \mathcal{C}$, $c(e_{\ell}) = \sum_{S:e_{\ell} \in \delta^+(C)} y_C$ \\
 $F \leftarrow F \cup \{e_{\ell}\}$
 }
 \BlankLine \BlankLine
 // The pruning phase \\
 $F' \leftarrow F$ \\
 \For{$j \leftarrow \ell$ downto $1$} {
 \If{$F'-\{e_j\}$ is feasible} {
 $F' \leftarrow F'-\{e_j\}$
 }
 }
 $F^1 \leftarrow \{(u,v) \in F'| (v,u) \in F'\}$ \\
 $F' \leftarrow F'-F^1$
 \BlankLine \BlankLine

 // The reduction phase \\
 $F^2 \leftarrow \emptyset$ \\
 Let $\{(s_i',t_i')\}$ be the set of pairs such that there are disjoint bidirected paths between $s_i'$ and $t_i'$ in $F'$ AND at least one of the following holds for $v \in \{s_i',t_i'\}$: \\
 (1) $v$ is adjacent to some edge in $F^1$ \\
 (2) $v \in \{s_i,t_i\}$ for some $i \in \{1,\hdots,k\}$ \\
 \For{all pairs $(s_i',t_i')$} {
 Let $P_s$ be the directed path $s_i'-t_i'$ \\
 Let $P_t$ be the directed path $t_i'-s_i'$ \\
 $P \leftarrow \arg\min_{P \in \{P_s, P_t\}} \tau(P)$ \\
 Double the arcs in $P$ by adding the ones in reverse direction \\
 $F^2 \leftarrow F^2 \cup P$
 }
 \BlankLine \BlankLine
 $F^3 \leftarrow F^1 \cup F^2$ \\
 \Return $(F^3, y)$
\end{algorithm}

As in the case of \texttt{AKR}, the set of dual variables that are grown at a particular phase in the new algorithm must naturally satisfy certain properties. Given a cut $S$ (synonymously a \emph{dual}) determined by a set of vertices, an already selected set of edges $F$, we say that $S$ is a \emph{minimal violated set} if there is at least one $(s,t) \in R$ such that $|S \cap \{s,t\}|=1$, $|\delta^+(S) \cap F|=0$, and $S$ is minimal with respect to inclusion. Note that the implications of these conditions are different from those of \texttt{AKR} based on the undirected cut relaxation. In that case, one can simply take the connected components $S$ of $(V,F)$ satisfying the property that $|S \cap \{s,t\}|=1$ for some $(s,t)$ which correspond to minimal violated sets. However, determining the minimal violated sets is not easy in our case since the underlying graph is directed. In particular, the minimal violated sets in the new algorithm are not necessarily disjoint (See the leftmost picture in Figure~\ref{merging}). We also make the distinction between the two different phases of the algorithm and say that a set $S$ satisfying the usual conditions stated above is an \emph{$s$-minimal violated set} if in addition it contains at least one $s_i$ but not $t_i$ for some valid $i$. In this case, we say that the corresponding dual \emph{originates from $s_i$}. Similarly, we say that it is a \emph{$t$-minimal violated set} if it contains at least one $t_i$ but not $s_i$, and we say that the corresponding dual originates from $t_i$. With this terminology, we are interested in growing the $s$-minimal violated sets in the first phase, and the $t$-minimal violated sets in the second phase.

The third phase which we call the pruning phase considers the arcs in the reverse order of their inclusion and discards an arc unless its exclusion violates the feasibility. The order is determined by the inclusion of the arcs in the first augmentation phase followed by the second augmentation phase. The arcs selected in both directions and which remain after this phase make the set $F^1$. These are in the final solution. At the end of the phase, we have the set of arcs $F'$ which are only selected in one direction.

The effect of the pruning phase of the new algorithm is quite different than that of \texttt{AKR}. In our case, it may not be clear which edges of the original input graph we should select even though the result is feasible. An example is given in Figure~\ref{bad} with the set of arcs shown in $F^1 \cup F'$, the feasible solution by the end of the pruning phase. The nontrivial duals grown are also shown in dashed lines. To see that we might have such an instance, note that there are two dual variables running on the arc with cost $1/2+\epsilon$, one on the $t_1$-side and the other on the $t_2$-side. This results in the inclusion of that arc before the arcs of cost $1$ are taken. In contrast, the arcs of cost $1$ are included in the first phase before the relevant dual covers the arcs of cost $1/2+\epsilon$ and $1/2$. The last phase is run as a final remedy for this situation.
\begin{figure}[!t]
\begin{center}
\includegraphics[width=.55\textwidth]{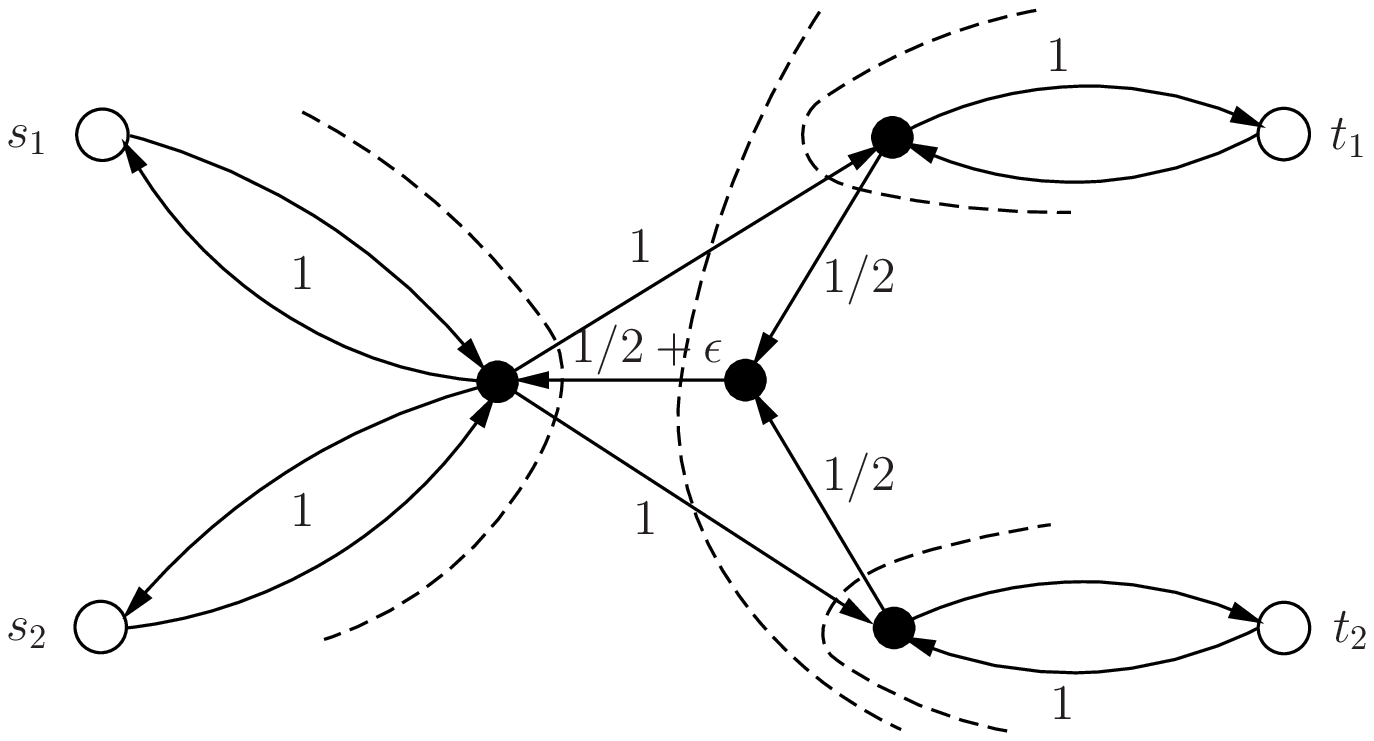}
\end{center}
\caption{An example on which the pruning phase does not lead to a valid solution}
\label{bad}
\end{figure}

In order to state the fourth phase, we need to consider a specific meaning for the growth of duals in the algorithm. A useful intuition for the type of primal-dual approach we utilize is to consider the growth of the duals in each iteration as a continuous process over time. Considering an arc along which a dual grows as a line segment, within a period of time $\epsilon > 0$, the dual is considered to ``cover'' a cost of $\epsilon$ of the arc, starting from the already covered part and continuously extending. An arc of unit length might go tight after half a unit of time if there are two duals growing along it. Accordingly, given any $\epsilon > 0$, if there are $d$ duals growing along an arc $e$ in the period of time $\epsilon$, we think that the partial cost of $\epsilon d \leq \frac{1}{2} c(e)$ is covered by the duals in that period. Given two vertices $u$ and $v$, and the directed path $P = u-v$ between them, we denote the period of time from the moment a dual is formed including $u$ to the moment a dual is formed including $v$ by $\tau(P)$.

The fourth phase, which we call the reduction phase determines the set of edges to be selected based on the information in $F'$. We consider all the pairs $(s_i',t_i')$ such that there are \emph{node disjoint} bidirected paths between $s_i'$ and $t_i'$ in $F'$, together with the requirement that at least one of the following holds for $v \in \{s_i',t_i'\}$.
\begin{itemize}
\item $v$ is adjacent to some edge in $F^1$ (i.e. the edges induced by the arcs taken in both directions);
\item $v \in \{s_i,t_i\}$ for some $i$ (i.e. the original set of pairs).
\end{itemize}

\noindent These are precisely the endpoints of the subgraphs that we seek a valid solution on. The algorithm considers both of the directed paths between such pairs and takes the path with a smaller $\tau$ value, i.e. the path which goes tight in a shorter period of time. It doubles the selected arcs by taking them in both directions and includes them into the solution $F^2$, which is feasible by definition. The final solution is the union of $F^1$ and $F^2$.

To give an example of the last phase, we consider again the graph given in Figure~\ref{bad}. $F'$ at the end of the pruning phase consists of the arcs forming the disjoint directed paths between the intermediate vertices. Note that there are two duals of the second phase growing along the arc of cost $1/2+\epsilon$. The time to cover the arcs directed from the $t$-side to the $s$-side is then $1/2+\frac{1}{2}(1/2+\epsilon) = 3/4+\epsilon/2$. The time to cover the arcs of cost $1$ directed from the $s$-side to the $t$-side on the other hand is $1$ since there is a single dual growing in the first phase. So the arcs of cost $1/2$ and $1/2+\epsilon$ are taken by the reduction phase.
\section{Implementation details}
We will give in this section a straightforward implementation of the algorithm. There may be faster and more compact implementations, which we leave as an open problem.

During the course of the algorithm, we explicitly store all the nodes in a given minimal violated set. Initially in both the first phase and the second phase, there are $k$ such lists and each list contains a single terminal representing a minimal violated set. By the execution of the algorithm, this number is non-increasing. Consequently, we have at most $k$ minimal violated sets at any time in the algorithm. We describe how to select the next arc and how to update the minimal violated sets for the first phase of the algorithm. The running times will be identical in the second phase.

In order to find the next tight arc, we keep a priority queue for arcs. The key values of the arcs are the times at which they will go tight. Initially, all the arcs that are not incident to the terminals might be set to $\infty$, and the key values of the immediately accessible arcs are set to their correct values examining their costs. For each arc, we also keep a list of duals growing on that arc. This is convenient in updating the key values. The initialization of the priority queue takes $O(m)$ time. At each iteration of the loop, we extract the minimum from the priority queue and update all the other arcs in the queue with the information obtained from the new set of minimal violated sets. This takes at most $O(m\log n)$ time since we consider at most $m$ arcs to update. In practice, this number might be much smaller.

Updating the minimal violated sets is the most expensive part of the algorithm. Upon inclusion of the arc in the current iteration, we update the list of nodes in the sets by performing a standard graph traversal procedure such as BFS, which takes time $O((m+n)k)=O(mk)$. Notice that not all of these sets might be minimally violated, i.e. there might be a set which is a proper subset of another. Initially declare all the sets \emph{active}, i.e. consider them as minimal violated sets.  In order to determine which one of these are actual minimal violated sets, we perform the following operation starting from the smallest cardinality set (assume that the lists keep their sizes). Compare the elements in the set with all the other sets, and if another set turns out to be a strict superset of this set, declare the larger set \emph{inactive}, i.e. not a minimal violated set. Comparing sets can be performed in expected time $O(n)$ by hashing the values of one set and looping over the second set to see if they contain the same elements. Hence, for a single set, we spend $O(nk)$ time in expectation. The total time requirement for this operation is then $O(nk^2)$. If the two sets compared are identical, we \emph{merge} them into a new minimal violated set and declare it active (See Figure~\ref{merging} for an example of this procedure and merging). The number of iterations of the main loop of the algorithm is at most $O(n)$. So the execution of the whole loop takes time $O(mn\log{n}+mnk+n^2k^2)$ in expectation.

In order to implement the pruning phase, we iterate over the arcs in $F'$.  For each such arc, we check for each terminal pair if they are still connected even if the arc is discarded. This takes time $O((m+n)k=O(mk)$ with a standard graph traversal algorithm. Since there are at most $O(n)$ arcs to consider, the total running time is then $O(mnk)$. The reduction phase amounts to re-executing the algorithm on a subgraph with an extra bookkeeping of times. Its running time is absorbed by
that of the whole algorithm. Thus, the algorithm can overall be implemented in time $O(mn\log{n}+mnk+n^2k^2)$.
\begin{figure}[!t]
\begin{center}
\begin{subfigure}{.32\textwidth}
  \centering
  \includegraphics[width=\linewidth]{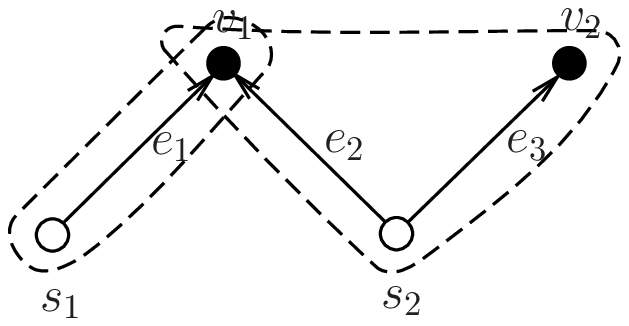}
  \label{merging-1}
\end{subfigure}
\begin{subfigure}{.32\textwidth}
  \centering
  \includegraphics[width=\linewidth]{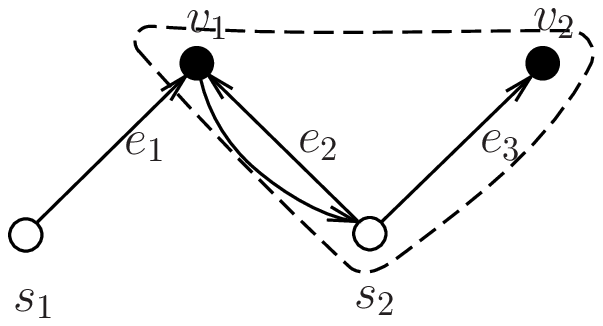}
  \label{merging-2}
\end{subfigure}
\begin{subfigure}{.32\textwidth}
  \centering
  \includegraphics[width=\linewidth]{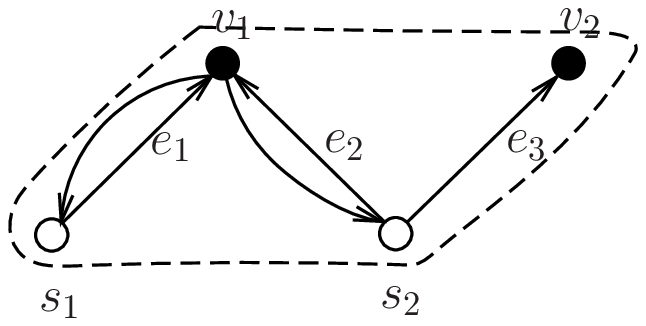}
  \label{merging-3}
\end{subfigure}
\end{center}
\vspace*{-5mm}
\caption{An example of a merging}
\label{merging}
\end{figure}

The merging of minimal violated sets is more difficult in the new algorithm compared to \texttt{AKR} since they do not necessarily merge even if they reach a common vertex. However, merging of minimal violated sets can occur during the algorithm, particularly when the algorithm takes some arcs in both directions. This is illustrated in Figure~\ref{merging}. The dual growing from $s_1$ takes the arc $e_1$ in forward direction, which we denote by $e_1^+$. The list of nodes representing this dual then becomes $S_1=\{s_1,v_1\}$. The dual growing from $s_2$ takes the arcs $e_2^+$ and $e_3^+$, making its node list $S_2=\{s_2,v_1,v_2\}$. When $S_1$ continues to grow to take $e_2^-$, its node list is updated to $S_1=\{s_1,v_1,s_2,v_2\}$, the set of nodes reachable from $s_1$. However, this cannot be a minimal violated set as $S_2$ is a proper subset of this list. As a result, the only minimal violated set in this iteration is $S_2=\{s_2,v_1,v_2\}$. After some time, $S_2$ takes $e_1^-$ and adds $s_1$ to its node list. At this time, the algorithm realizes that $S_1$ and $S_2$ are the same, i.e. $\{s_1,s_2,v_1,v_2\}$, and merges them to a new minimal violated set.
\section{Proof of the approximation ratio}
Recall that, given \emph{any time}, one can see the set of duals and their positions as a snapshot of the algorithm. In the following discussion proving Propositions~\ref{against-1}-\ref{against-3} and Lemmas~\ref{dual-one}-\ref{tree-2}, we refer the behavior of duals in an infinitesimal amount of time in which the snapshot remains the same.

With an abuse of notation, we denote by $F^3$ the set of undirected edges induced by the solution when we consider it as a subset of the original input graph. Given an iteration of the first phase and a dual $C^s$, we will consider the set of edges $\Delta(C^s) \cap F^3$, where $\Delta(C^s)$ denote the undirected set of edges induced by $\delta^+(C^s)$. We make the same definitions for a dual $C^t$ grown in an iteration of the second phase. Throughout this section, we say that duals grow along \emph{edges}, rather than arcs. This is for simplicity of discussion as we usually consider duals growing along the two arcs representing the same edge.

Given an undirected edge $e=\{u,v\} \in F^3$, we consider $e$ as a line segment defining an interval $[u=0,v=\frac{1}{2}c(e)]$. A single dual $C^t$ growing on $e^-=(v,u)$ (from $v$ to $u$) is considered to be \emph{grown against} a single dual $C^s$ growing on $e^+=(u,v)$ (from $u$ to $v$) if there is an interval $[a,b] \subseteq [u,v]$ such that both $C^s$ and $C^t$ grow on this interval. Given a dual $C^s$ grown in the first phase of the algorithm, the set of all duals grown against $C^s$ on $e$ for all $e \in \Delta(C^s) \cap F^3$ is called the set of \emph{duals grown against $C^s$}. We make similar definitions for a dual $C^t$ growing in an iteration of the second phase. Note that if $e \in F^1$, it is possible that a dual grown against $C^s$ belongs to the set of duals grown in the first phase. This happens, for instance, when two $s$-terminals are closer to each other than any other terminals. Similarly, a dual grown against $C^t$ might have grown in the second phase.

There may be multiple duals growing against each other on an edge. Accordingly, we make the following refinement over the definition above. Consider the case where a set of duals $\{C^{t_1}, \hdots, C^{t_{\beta}}\}$ is grown against $\{C^{s_1}, \hdots, C^{s_{\alpha}}\}$ on an edge $e$ within a period of time $\epsilon$. We define the \emph{graph of duals} on the edge $e$ as a single edge between two vertices. One of the vertices $C^s$ corresponds to the set $\{C^{s_1}, \hdots, C^{s_{\alpha}}\}$, and is assigned a \emph{growth speed} of $\alpha$. The other vertex $C^t$ represents the set $\{C^{t_1}, \hdots, C^{t_{\beta}}\}$ with a growth speed of $\beta$. We will later generalize the notion of graph of duals.

The proof of the performance ratio relies on the properties of the set of duals grown against each other. First, we make sure that there always exists a dual grown against another one.
\begin{prop}
\label{against-1}
(a) If $e^+=(u,v) \in \delta^+(C^s) \cap F^1$ for some $C^s$ grown in the first phase, then there is a dual grown against $C^s$ on $e=\{u,v\}$.

(b) If $e^-=(v,u) \in \delta^+(C^t) \cap F^1$ for some $C^t$ grown in the second phase, then there is a dual grown against $C^t$ on $e=\{u,v\}$.
\end{prop}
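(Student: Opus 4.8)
The plan is to prove part (a) by contradiction; part (b) is symmetric and follows by swapping the roles of $s_i$ and $t_i$ and of the two phases. So suppose $e^+ = (u,v) \in \delta^+(C^s) \cap F^1$ for some dual $C^s$ grown in the first phase, but that no dual is grown against $C^s$ on $e$. Since $e \in F^1$, both arcs $e^+$ and $e^-$ survived the pruning phase, which means in particular that $e^- = (v,u)$ was selected by the algorithm. An arc is selected only when it goes tight, i.e.\ its full cost $\frac12 c(e)$ is covered by duals in $\delta^+(\cdot)$ of the appropriate orientation; by the ``covering'' interpretation of dual growth introduced before the statement, the interval $[v=0, u=\frac12 c(e)]$ for $e^-$ must be entirely covered by duals growing on $e^-$. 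Likewise, because $e^+ \in \delta^+(C^s)$ and $C^s$ is being grown, the covered portion of $e^+$ — the interval $[u=0,\dots]$ — is nonempty at the snapshot under consideration, and in fact $e^+$ itself is eventually selected (it is in $F^1 \subseteq F$), so at some moment the whole interval $[u=0,v=\frac12 c(e)]$ of $e^+$ is covered by first-/second-phase duals growing along $e^+$.

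The key observation is that the coordinates are reversed: a point at distance $x$ from $u$ on $e^+$ is the point at distance $\frac12 c(e) - x$ from $v$ on $e^-$. So "some dual is grown against $C^s$ on $e$" is exactly the statement that the portion of $[u,v]$ covered (now or earlier) by duals on $e^+$ and the portion covered by duals on $e^-$ overlap in an interval. If no such overlap ever occurs, then at every moment the union of the $e^+$-covered interval (an initial segment $[u, p]$) and the $e^-$-covered interval (a final segment $[q, v]$) has $p \le q$, and since by tightness both must eventually fill the whole segment, we would need $p$ to reach $v$ using only $e^+$-duals while $q$ simultaneously stays at $v$ — i.e.\ $e^-$ is covered entirely by $e^+$-duals, which is impossible because a dual growing on $e^+ = (u,v)$ has $u$ on its inside and $v$ outside, whereas a dual covering $e^-=(v,u)$ must have $v$ inside and $u$ outside, and these are the duals that paid for $e^-$. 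More carefully: the arc $e^-$ is only ever tightened by duals $C$ with $e^- \in \delta^+(C)$, i.e.\ with $v \in C$, $u \notin C$; these duals grow from $v$ toward $u$ and cover a final segment of $[u,v]$. For $e^-$ to be selected this final segment must reach all the way to $u$. But $C^s$ has $u \in C^s$ and $v \notin C^s$, and $C^s$ covers a nonempty initial segment of $[u,v]$ on $e^+$; at the instant $C^s$ first touches $u$, that segment is $\{u\}$ but it strictly grows thereafter, and meanwhile the $e^-$-segment is sweeping down toward $u$. The two segments are forced to meet, contradicting the assumption of no dual grown against $C^s$.

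The step I expect to be the main obstacle is making the "the two covered segments must overlap" argument fully rigorous, because the dual $C^s$ and the duals covering $e^-$ are grown at different \emph{times} — the first phase entirely precedes the second — so one cannot literally watch them sweep toward each other simultaneously. The resolution is that the definition of "grown against" only requires that the covered intervals overlap as subsets of $[u,v]$, not that the overlap occur at the same instant: if $e^-$ was tightened during phase two, its covering duals occupy \emph{all} of $[u,v]$ by the end, so \emph{any} nonempty $e^+$-covered interval — in particular the one produced by $C^s$ — overlaps it. (If instead $e^-$ was taken in phase one, then by symmetry $e^+$ is tightened in phase one as well, and the same reasoning with the phase-one duals on $e^+$ that paid for it plays the role of the duals grown against $C^s$.) So the real content is just: $e \in F^1$ forces \emph{both} arcs to be fully covered, full coverage of the oppositely-oriented arc plus nonempty coverage of $e^+$ by $C^s$ forces an overlap, and overlap is by definition a dual grown against $C^s$. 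The only genuinely delicate point is checking that "$C^s$ is grown and $e^+ \in \delta^+(C^s)$" indeed implies $C^s$ covers a nonempty sub-interval of $e^+$ containing $u$-side endpoint, which is immediate from the line-segment interpretation of dual growth fixed just before the proposition.
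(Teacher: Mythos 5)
Your proposal is correct and rests on exactly the same observation as the paper's (one-sentence) proof: since $e\in F^1$, the reverse arc $e^-$ was itself selected, hence went tight, hence is fully covered by duals oriented from $v$ to $u$, and any such full coverage must overlap the nonempty initial segment covered by $C^s$, which is precisely the definition of a dual grown against $C^s$. The detour through the ``sweeping segments'' contradiction is unnecessary, but your closing paragraph correctly identifies that only overlap of covered intervals (not simultaneity) is required, which is the whole content of the paper's argument.
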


\begin{proof}
The first part of the statement holds by the definition of the algorithm since $e^-=(v,u) \in F^1$ is included into the solution either in the first phase or in the second phase. The second part is symmetric to the first one.
\end{proof}

\noindent Proposition~\ref{against-1} does not hold for the edges in $F^2$ since the edges in the source graph $F'$ are not selected in both directions. We thus properly define a new set of duals growing against each other on $F^2$, first considering a single edge. Let $P^+ = s'-t'$ and $P^- = t'-s'$ be the disjoint paths between a pair $(s',t')$ considered in the reduction phase of the algorithm. Assume without loss of generality that $\tau(P^+) \leq \tau(P^-)$, i.e. $P^+$ with doubled edges is selected by the algorithm. Let $e^+=(u,v) \in P^+$ be any edge selected via the duals grown in the first phase. Taking $e$ as a line segment $[u,v]$, let the set of duals grown along some $[a,b] \subseteq [u,v]$ be $\{C_1^s, \hdots, C_{\alpha}^s\}$ such that they originate from $s_1', \hdots, s_{\alpha}'$, respectively, and $e^+$ is taken by the reduction phase between $s_j'$ and $t_j'$ for all $1 \leq j \leq \alpha$. Note that there exists at least one such $j$ by definition. We create a set of $\alpha$ new duals, namely $\{C_1^t, \hdots, C_{\alpha}^t\}$ each growing against $C_j^s$ along the interval $[a,b]$.

Consider now defining a new set of duals described as above for all pairs $(s_i',t_i')$ connected by a path $P_i^+$ on which we select an edge $e_i^+$ and take an interval covered within a period of time $\epsilon$. We consider all these duals $y_S'$ for $F^2$ in the rest of our discussion, unless we explicitly mention the duals $y_S$ computed by the algorithm. So, considering $y_S'$, we have
\begin{prop}
\label{against-2}
(a) If $e^+=(u,v) \in \delta^+(C^s) \cap F^2$ for some $C^s$ grown in the first phase, then there is a dual grown against $C^s$ on $e=\{u,v\}$.

(b) If $e^-=(v,u) \in \delta^+(C^s) \cap F^2$ for some $C^t$ grown in the second phase, then there is a dual grown against $C^t$ on $e=\{u,v\}$.
\end{prop}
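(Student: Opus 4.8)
The plan is to obtain Proposition~\ref{against-2} directly from the construction of the auxiliary duals $y_S'$ on $F^2$ carried out just above it, in the same way that Proposition~\ref{against-1} fell out of the definition of the algorithm: there the witnessing dual existed automatically because every edge of $F^1$ is present in both orientations, whereas here it exists because we manufactured it. The first thing I would record is the structural fact that makes the manufacture necessary and that pins down the relevant orientations: for every $e=\{u,v\}\in F^2$, exactly one of the two arcs $e^+=(u,v)$, $e^-=(v,u)$ was actually selected by the algorithm and lies in $F'$ --- if both had survived pruning, $e$ would have been moved into $F^1$ and could not appear in $F^2$ --- and the missing orientation is precisely the one added by the doubling step of the reduction phase along the chosen path $P$.

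For part (a), suppose $e^+=(u,v)\in\delta^+(C^s)\cap F^2$ with $C^s$ a first-phase dual that covered a nonempty sub-interval $[a,b]\subseteq[u=0,\ v=\frac{1}{2}c(e)]$ of $e$. Since $e\in F^2$, the edge $e$ lies on the chosen path $P$ of some reduction-phase pair, the covering orientation $e^+$ is the one selected by the algorithm (by the structural fact above), and the originating terminal of $C^s$ is therefore one of the endpoints $s_j'$ of a reduction-phase pair that uses the arc $e^+$. Consequently $C^s$ is one of the duals called $C_j^s$ in the construction for this pair and this sub-interval, and the construction explicitly introduced a dual $C_j^t$ growing against $C_j^s=C^s$ on $[a,b]$; that dual is the required witness. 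Part (b) is the mirror image: if $C^t$ is one of the manufactured duals $C_j^t$ sitting on an arc $e^-\in F^2$, it was defined to grow against the first-phase dual $C_j^s$, and since ``grown against'' is a symmetric relation, $C_j^s$ is a dual grown against $C^t$ on $e$; the only other sub-case --- $C^t$ an honest second-phase dual with $e^-\in\delta^+(C^t)\cap F^2$ --- arises exactly when the algorithmically selected orientation of $e$ was taken in the second augmentation phase, and is settled by running the $y_S'$ construction with the two phases (and the roles of $s$ and $t$) interchanged, precisely as the definition preceding the statement indicates.

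The only place any care is needed, and the step I expect to be the main obstacle (though a mild one), is the orientation-and-coverage bookkeeping: one must check that ``$e^+\in\delta^+(C^s)$'' for a first-phase dual really does force $C^s$ to have covered a genuine sub-interval of $e$ in the forward direction of the chosen path, that its originating terminal is one of the $s_j'$ appearing in the construction for $e$, and hence that the family $\{C_j^s\}$ used in the construction is exactly the family of first-phase duals referenced by the hypothesis; in particular one should verify that the construction (or its symmetric twin) still produces a partner when an arc of the chosen path receives coverage from duals of both phases. Once this matching is in place the proposition is immediate; there is no combinatorial content here comparable to the analysis of how minimal violated sets merge.
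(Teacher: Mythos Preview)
Your proposal is correct and matches the paper's treatment: the paper gives no separate proof of Proposition~\ref{against-2} at all, presenting it as an immediate consequence of the construction of the auxiliary duals $y_S'$ just described (``So, considering $y_S'$, we have\ldots''). Your write-up simply spells out the bookkeeping behind that sentence, and the potential subtleties you flag (orientation of the surviving arc, membership of $C^s$ among the $C_j^s$'s, the symmetric case for second-phase duals) are exactly the points one must check to make the implication precise.
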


The edges in $F^2$ are now tight with respect to $y_S'$, i.e. their total cost is exactly covered by the duals in $y_S'$. Let $\alpha$ be the number of new duals defined on an interval covered within a period of time $\epsilon$, as described above. Each new dual covers a cost of $\epsilon$. This cost is compensated by one of the actual duals grown by the algorithm from some $t_j'$ to $s_j'$, particularly by covering the same portion of the cost of an edge on the path $t_j'-s_j'$. Indeed, since $e_i^+$ is taken by the algorithm on which $C_j^s$ grows, for the paths $P_j^+=s_j'-t_j'$ and $P_j^-=t_j'-s_j'$, we have $\tau(P_j^+) \leq \tau(P_j^-)$, and such a dual always exists. In particular, this implies
\begin{prop}
\label{against-3}
For every new set of $\alpha$ duals in $y_S'$ defined on an interval in $F^2$, each with value $\epsilon$, there is a distinct dual in $y_S$ of value $\epsilon$ computed by the algorithm.
\end{prop}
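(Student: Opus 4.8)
The plan is to exhibit an injective map from the collection of newly-defined dual sets in $y_S'$ (each such ``set'' being the bundle of $\alpha$ duals $\{C_1^t,\dots,C_\alpha^t\}$ created on one interval $[a,b]$ of one edge $e_i^+ \in P_i^+$) into the duals $y_S$ actually grown by the algorithm, so that the image dual has value exactly $\epsilon$ and the map preserves distinctness. First I would fix attention on one such newly-defined bundle: it lives on an interval $[a,b] \subseteq [u,v]$ of an edge $e_i^+=(u,v)$ lying on the chosen path $P_i^+ = s_i'-t_i'$, and by construction each of its $\alpha$ members $C_j^t$ is paired to a real first-phase dual $C_j^s$ originating from some $s_j'$ with $e_i^+$ selected by the reduction phase between $s_j'$ and $t_j'$. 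The key point, already recorded in the discussion preceding the statement, is that for such an index $j$ we have $\tau(P_j^+) \le \tau(P_j^-)$, where $P_j^+ = s_j'-t_j'$ and $P_j^- = t_j'-s_j'$; hence within the time window $[a,b]$ (measured along $e_i^+$, i.e. the window during which $C_j^s$ covered $[a,b]$) a genuine algorithm-dual originating from $t_j'$ is simultaneously active and covering the corresponding portion of some edge of $P_j^-$. I would charge the entire bundle to one such real dual, say the one originating from $t_1'$ along $P_1^-$, covering the matching sub-interval of cost $\epsilon$; since that real dual covered a cost of $\epsilon$ in that window and all $\alpha$ new duals have value $\epsilon$, the value match is immediate.

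Next I would verify the distinctness claim, which is where the real content lies. Two different newly-defined bundles can a priori try to charge the same real dual, so I must argue this cannot happen. The natural way is to observe that the charging is ``local in time and space'': the real dual from $t_1'$ that a bundle charges is covering a specific sub-interval of a specific edge of $P_1^-$ during a specific infinitesimal time window, and the map sends the bundle to \emph{that} portion of the real dual's coverage. Distinct bundles correspond either to distinct edges $e_i^+$, to distinct intervals $[a,b]$ on the same edge, or to distinct time windows; in every case the portion of real-dual coverage they are charged to is disjoint. The cleanest formalization is the time-sweep / line-segment bookkeeping already in force throughout Section~5: think of every real algorithm-dual as covering, over the course of the run, a union of (edge, sub-interval, time-window) atoms each of infinitesimal cost $\epsilon$; the map assigns to each newly-defined bundle one such atom, and two bundles defined on different intervals or at different times necessarily receive different atoms. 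Hence the total value of all new duals is at most the total coverage of the real duals on the relevant edges, giving a distinct $\epsilon$-valued real dual for each bundle.

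The main obstacle I anticipate is precisely pinning down this ``atom'' bookkeeping so that distinctness is airtight rather than merely plausible — in particular, ensuring that when several real first-phase duals $C_1^s,\dots,C_\alpha^s$ co-grow on the same interval $[a,b]$ of $e_i^+$ (so that $\alpha>1$), the corresponding real second-phase duals from $t_1',\dots,t_\alpha'$ are genuinely $\alpha$ distinct duals and that no two \emph{different} bundles reuse any of them. This needs the fact that the pairs $(s_i',t_i')$ chosen in the reduction phase are connected by \emph{node-disjoint} bidirected paths in $F'$, so the witnessing real duals for distinct $j$ live on disjoint vertex sets and cannot coincide; and it needs that $e_i^+$ being tight with respect to $y_S'$ exactly (Proposition stated just above) forces the accounting to balance with no slack. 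Once the disjointness of the charged atoms is established, the conclusion — a distinct algorithm-dual of value $\epsilon$ for each new bundle of value $\epsilon$ — follows. I would also remark that the roles of the first and second phases are symmetric, so the identical argument handles any bundle that might instead have been created from a path selected ``in the other direction,'' and no separate case is needed.
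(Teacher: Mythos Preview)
Your proposal is correct and follows essentially the same approach as the paper. The paper's own justification is the short paragraph immediately preceding the proposition: since $\tau(P_j^+)\le\tau(P_j^-)$ for any $j$ with $e_i^+$ selected between $s_j'$ and $t_j'$, a genuine algorithm-dual originating from $t_j'$ is active during the same time window and covers ``the same portion of the cost of an edge on the path $t_j'-s_j'$''; your (edge, sub-interval, time-window) atom bookkeeping simply makes explicit the distinctness that the paper leaves implicit in that phrase. One minor over-worry: the proposition asks for \emph{one} real dual per bundle of $\alpha$ new duals, not $\alpha$ of them, so you need not establish that the duals from $t_1',\dots,t_\alpha'$ are all distinct---your time-window charging to a single $t_1'$-dual already suffices.
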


The following two lemmas use Proposition~\ref{against-1} and Proposition~\ref{against-2} as premises.
\begin{lem}
\label{dual-one}
Given a dual $C^s$ growing in an iteration of the first phase, let $\mathcal{C}^t$ be the set of duals grown against $C^s$. Then, for any $C^t \in \mathcal{C}^t$,
$$
|\Delta(C^s) \cap \Delta(C^t) \cap F^3| = 1.
$$
\end{lem}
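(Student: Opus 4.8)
The plan is to show the two inequalities $|\Delta(C^s)\cap\Delta(C^t)\cap F^3|\ge 1$ and $|\Delta(C^s)\cap\Delta(C^t)\cap F^3|\le 1$ separately. The lower bound is immediate from the definitions: since $C^t$ lies in the set of duals grown against $C^s$, there is an edge $e=\{u,v\}\in\Delta(C^s)\cap F^3$ with $e^+=(u,v)\in\delta^+(C^s)$, along which $C^s$ grows, and $e^-=(v,u)\in\delta^+(C^t)$, along which $C^t$ grows, over a common subinterval of $[u,v]$. In particular $e\in\Delta(C^t)$, so $e$ itself witnesses $|\Delta(C^s)\cap\Delta(C^t)\cap F^3|\ge 1$.

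For the upper bound I would argue by contradiction, supposing there are two distinct edges $e_1=\{u_1,v_1\}$ and $e_2=\{u_2,v_2\}$ of $\Delta(C^s)\cap\Delta(C^t)\cap F^3$, labelled so that $u_1,u_2\in C^s$ and $v_1,v_2\notin C^s$. I would first dispose of the easy case in which $C^t$ is one of the auxiliary duals $y_S'$ introduced for $F^2$: by construction such a dual is attached to a single edge of a single reduction path over a single interval, so it cannot be grown against $C^s$ on two distinct edges at all. Hence assume $C^t$ is a dual actually grown by the algorithm in one of the two phases. Next I would pin down how $C^t$ sits relative to $e_1$ and $e_2$ --- whether the endpoint of $e_j$ inside $C^s$ is inside or outside $C^t$ --- a short case analysis that uses Propositions~\ref{against-1} and \ref{against-2} together with the fact that $C^t$ is grown against $C^s$ on at least one edge of the intersection.

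The structural core of the argument rests on the following observation: a minimal violated set of either phase has empty forward boundary in the arc set present at its snapshot, hence is forward-closed, and by minimality it must coincide with the set of vertices reachable, along those arcs, from the terminal it originates from. Consequently $C^s$ and $C^t$ are each weakly connected, with $u_1$ and $u_2$ joined inside $C^s$ through the origin of $C^s$, and with the corresponding endpoints of $e_1,e_2$ inside $C^t$ joined through the origin of $C^t$. Concatenating these connecting walks with $e_1$ and $e_2$ exhibits a closed walk, all of whose edges lie in the undirected graph underlying the augmented arc set $F$. I would then contradict the existence of such a closed walk using the pruning and reduction phases: after pruning the surviving arcs are arc-minimal, so $F^1$ is acyclic, every undirected cycle remaining in $F'$ consists of two internally node-disjoint directed paths between some reduction pair, and the reduction phase collapses each such cycle to a single doubled path --- so no cycle of $F^3$ can cross the snapshot cut $\delta(C^s)$ in the two edges $e_1,e_2$ from the same side.

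The step I expect to be the real obstacle is exactly this last one. The sets $C^s$ and $C^t$ are snapshots and not components of the final solution, and the augmented graph $F$ genuinely contains cycles (the reduction structures), so one must track carefully which edges of the closed walk survive pruning, which are modified by reduction, and how $\delta(C^s)$ meets a reduction structure; the case analysis for $C^t$ in the previous paragraph is, by comparison, routine once the auxiliary-dual case has been removed.
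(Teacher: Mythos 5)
Your lower bound and your dismissal of the case where $C^t$ is one of the auxiliary duals $y_S'$ are fine, and your overall plan (two crossing edges force a cycle) is the same as the paper's. But there is a genuine gap, and it sits exactly where you predicted: your closed walk lives in the wrong graph. You connect $u_1$ to $u_2$ inside $C^s$, and the opposite endpoints inside $C^t$, using arbitrary arcs of the snapshot arc set $F$ (via the reachability characterization of minimal violated sets). The resulting cycle is therefore a cycle in the pre-pruning graph, which, as you concede, genuinely contains cycles; ruling it out would require the unproved structural claims you list about which edges survive pruning, what the cycles of $F'$ look like, and how the reduction phase collapses them. None of that is carried out, and the middle claim (``every undirected cycle remaining in $F'$ consists of two internally node-disjoint directed paths between some reduction pair'') is itself about as hard as the lemma.

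The paper closes this by anchoring the cycle in $F^3$ from the start rather than in $F$. Since both $e_1$ and $e_2$ survive the pruning phase, neither is redundant, so each is used by the $F^3$-path of some terminal pair: one finds indices $i,j$ with $s_i,s_j\in C^s$ and $t_i,t_j\in C^t$ (or, when $C^t$ is a first-phase dual, $s_i\in C^s$ and $s_{j_1},s_{j_2}\in C^t$) such that $e_1$ lies on the $s_i$--$t_i$ path and $e_2$ on the $s_j$--$t_j$ path in $F^3$. Minimality of $C^s$ then forces a bidirected connection between $s_i$ and $s_j$ (both lie in the same minimal forward-closed set, so each is reachable from the other), and likewise minimality of $C^t$ connects $t_i$ and $t_j$. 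Concatenating these with the two terminal-pair paths yields a cycle in the final solution $F^3$, which is a forest --- no analysis of which arcs of $F$ survive pruning or reduction is needed. In short: replace ``weak connectivity of $C^s$ and $C^t$ inside $F$'' by ``each surviving crossing edge is essential for a specific terminal pair, whose connecting path lies in $F^3$,'' and the contradiction becomes immediate.
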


\begin{proof}
If $|\Delta(C^s) \cap F^3| = 1$, there is nothing to prove. Thus, assume $|\Delta(C^s) \cap F^3| > 1$, and assume further for a contradiction that for some $C^t \in \mathcal{C}^t$, we have $|\Delta(C^s) \cap \Delta(C^t) \cap F^3| > 1$. Let $\{v_1,w_1\}$ and $\{v_2,w_2\}$ be two of the edges on which both $C^s$ and $C^t$ grow with $v_1,v_2 \in C^s$, $w_1,w_2 \in C^t$.

If $C^t$ has grown in the second phase, observe that there is an index $i$ such that $s_i \in C^s$ and $t_i \in C^t$. There is also an index $j$ such that $s_j \in C^s$ and $t_j \in C^t$. Otherwise, one of $\{v_1,w_1\}$ and $\{v_2,w_2\}$ would be redundant, resulting in a deletion in the pruning phase. Thus, we may assume without loss of generality that $\{v_1,w_1\}$ is on the path between $s_i$ and $t_i$, $\{v_2,w_2\}$ is on the path between $s_j$ and $t_j$. We further observe that there must be a bidirected path between $s_i$ and $s_j$ in $F_{\ell}$. Otherwise, it would contradict the minimality of $C^s$. Similarly, there is a bidirected path between $t_i$ and $t_j$. The existence of all these paths implies that there is a cycle in $F^3$, which is a contradiction (See Figure~\ref{dual-one-fig} for an illustration, where the cycle contains both of the terminal pairs).

If $C^t$ has grown in the first phase, then there are $s_i \in C^s$ and $s_{j_1},s_{j_2} \in C^t$. We may assume that $\{v_1,w_1\}$ is on the path between $s_i$ and $s_{j_1}$, $\{v_2,w_2\}$ is on the path between $s_i$ and $s_{j_2}$. By the minimality of $C^t$, there must also be a bidirected path between $s_{j_1}$ and $s_{j_2}$. This again induces a cycle, yielding a contradiction.
\end{proof}

\begin{figure}[!t]
\begin{center}
\includegraphics[width=.45\textwidth]{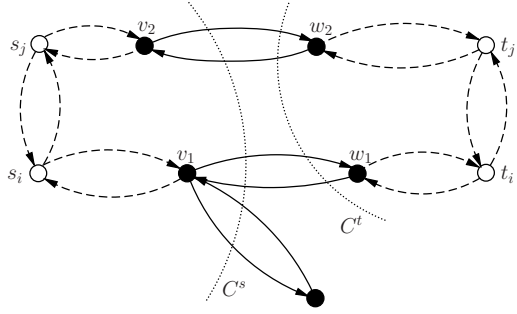}
\end{center}
\caption{An example illustrating the proof of Lemma~\ref{dual-one}}
\label{dual-one-fig}
\end{figure}

The symmetric result with a proof identical to that of Lemma~\ref{dual-one} except the interchanged roles of $C^s$ and $C^t$ is as follows.

\begin{lem}
\label{dual-one-t}
Given a dual $C^t$ growing in an iteration of the second phase, let $\mathcal{C}^s$ be the set of duals grown against $C^t$. Then, for any $C^s \in \mathcal{C}^s$,
$$
|\Delta(C^t) \cap \Delta(C^s) \cap F^3| = 1.
$$
\end{lem}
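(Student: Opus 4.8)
The statement of Lemma~\ref{dual-one-t} is the exact mirror image of Lemma~\ref{dual-one}, obtained by swapping the roles of the first and second phases, of $s$-terminals and $t$-terminals, and of the directions $e^+$ and $e^-$. The author even signals this explicitly in the sentence preceding the statement. So my plan is to prove it by invoking this symmetry rather than redoing the whole argument.

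\begin{proof}
The proof is symmetric to that of Lemma~\ref{dual-one}, with the roles of the first and second augmentation phases interchanged, and correspondingly the roles of the terminals $s_i$ and $t_i$, the arcs $e^+$ and $e^-$, and the duals $C^s$ and $C^t$ swapped throughout. We sketch the argument for completeness. If $|\Delta(C^t) \cap F^3| = 1$ there is nothing to prove, so assume $|\Delta(C^t) \cap F^3| > 1$ and, for contradiction, that there is $C^s \in \mathcal{C}^s$ with $|\Delta(C^t) \cap \Delta(C^s) \cap F^3| > 1$. Let $\{v_1,w_1\}$ and $\{v_2,w_2\}$ be two edges on which both $C^t$ and $C^s$ grow, with $v_1,v_2 \in C^t$ and $w_1,w_2 \in C^s$. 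Proposition~\ref{against-1} and Proposition~\ref{against-2} guarantee that such a $C^s$ grown against $C^t$ exists whenever the relevant arc is in $F^1$ or $F^2$.

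If $C^s$ has grown in the first phase, there are indices $i$ and $j$ with $t_i \in C^t$, $s_i \in C^s$, and $t_j \in C^t$, $s_j \in C^s$; otherwise one of the two edges would be redundant and removed during the pruning phase. We may assume $\{v_1,w_1\}$ lies on the path between $t_i$ and $s_i$, and $\{v_2,w_2\}$ lies on the path between $t_j$ and $s_j$. The minimality of $C^t$ forces a bidirected path between $t_i$ and $t_j$ inside the current edge set, and the minimality of $C^s$ forces a bidirected path between $s_i$ and $s_j$. Together with the two connecting edges, these paths produce a cycle in $F^3$, contradicting the fact that the pruned solution contains no cycle.

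If instead $C^s$ has grown in the second phase (which can happen when two $t$-terminals are mutually closer than any other terminals, as noted before Proposition~\ref{against-2}), then there is $t_i \in C^t$ and $t_{j_1}, t_{j_2} \in C^s$, and we may take $\{v_1,w_1\}$ on the path between $t_i$ and $t_{j_1}$ and $\{v_2,w_2\}$ on the path between $t_i$ and $t_{j_2}$. Minimality of $C^s$ yields a bidirected path between $t_{j_1}$ and $t_{j_2}$, which again closes a cycle in $F^3$, a contradiction. Hence $|\Delta(C^t) \cap \Delta(C^s) \cap F^3| = 1$ in all cases.
\end{proof}

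The only point requiring a moment's care is making sure that the symmetry is genuine: the two augmentation phases of \textsc{Bidirected-Primal-Dual} are literally identical up to the substitution $s_i \leftrightarrow t_i$ (and hence $\delta^+ \leftrightarrow$ its reverse), the pruning phase treats both directions uniformly, and Propositions~\ref{against-1} and~\ref{against-2} are each stated in a two-part $s$/$t$-symmetric form. Given that, transporting Lemma~\ref{dual-one} through the symmetry is immediate and the cycle-in-$F^3$ contradiction goes through verbatim; I do not anticipate a real obstacle, only the bookkeeping of keeping the swapped labels straight.
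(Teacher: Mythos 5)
Your proposal is correct and matches the paper exactly: the paper itself proves Lemma~\ref{dual-one-t} by stating that the argument is identical to that of Lemma~\ref{dual-one} with the roles of $C^s$ and $C^t$ interchanged, which is precisely the symmetry you invoke (and your written-out mirrored version is a faithful transcription of that swap).
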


The following theorem establishes the approximation ratio of the algorithm by weak duality.
\begin{thm}
If $(F^3,y)$ is the solution returned by the new primal-dual algorithm for the \textsf{Steiner forest} problem, then
$$
\frac{1}{2} \sum_{e \in F^3} c(e) \leq \left(2-\frac{1}{k}\right) \cdot \sum_{S \subseteq V} y_S.
$$
\end{thm}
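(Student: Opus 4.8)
The plan is to bound the cost of the final solution $F^3 = F^1 \cup F^2$ by charging it against the dual $\sum_S y_S$ computed by the algorithm, via an intermediate charging to the enlarged dual collection $y'_S$ (the algorithm's duals $y_S$ together with the synthetic duals defined on $F^2$). The natural accounting is to process the duals as a continuous time-sweep: in an infinitesimal slice of time in which the snapshot is fixed, each active dual $C$ (an $s$-dual in phase one, a $t$-dual in phase two, or a synthetic dual on $F^2$) grows at unit speed along every edge of $\Delta(C)\cap F^3$ that it currently covers, contributing $\epsilon\cdot|\Delta(C)\cap F^3|$ to $\tfrac12 c(F^3)$ while contributing only $\epsilon$ to $\sum_S y'_S$. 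So it suffices to show that, summed over all active duals in a fixed snapshot, $\sum_C |\Delta(C)\cap F^3| \le (2-\tfrac1k)\cdot(\#\text{active duals})$, and then to translate the bound on $\sum_S y'_S$ back to a bound on $\sum_S y_S$.

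First I would set up the ``graph of duals'' for a fixed snapshot: vertices are the active duals, and for each edge $e\in F^3$ on which some dual grows, I put an edge in the graph of duals between the appropriate groups — using Proposition~\ref{against-1} for $e\in F^1$ and Proposition~\ref{against-2} for $e\in F^2$ to guarantee that whenever a dual grows along one direction of $e$ there is a dual grown against it along the other direction, so every edge of $F^3$ currently being paid for corresponds to an edge of this graph. Crucially, Lemma~\ref{dual-one} and Lemma~\ref{dual-one-t} say that any $C^s$ and any $C^t$ grown against it share exactly one edge of $F^3$; this means the graph of duals is a \emph{simple} bipartite-like multigraph with no repeated edges between the same pair of dual-groups, and I would argue (as in the standard \texttt{AKR}/\cite{GW} analysis) that it is moreover a \emph{forest}: a cycle in the graph of duals would, by tracing the corresponding edges of $F^3$ and the bidirected connectivity inside each dual, force a cycle in $F^3$, contradicting that $F^3$ restricted appropriately is a forest. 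Then $\sum_C |\Delta(C)\cap F^3|$ is exactly the sum of degrees of dual-vertices in this forest, which is at most $2(N-1)$ where $N$ is the number of dual-vertices; since $N \le k$ at every snapshot (each dual contains at least one terminal of a distinct pair), this gives the factor $2-\tfrac2{|A|}\le 2-\tfrac1k$, where $A$ is the maximum number of active duals.

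Next I would handle the passage from $y'_S$ to $y_S$. Integrating the snapshot inequality over time yields $\tfrac12 c(F^3) \le (2-\tfrac1k)\sum_S y'_S$. By construction $\sum_S y'_S = \sum_S y_S + (\text{total value of synthetic duals on }F^2)$, and Proposition~\ref{against-3} assigns to each block of $\alpha$ synthetic duals of value $\epsilon$ a \emph{distinct} dual of value $\epsilon$ among the $y_S$ — i.e. the synthetic duals are ``already paid for'' inside $\sum_S y_S$ and do not inflate the right-hand side beyond a constant factor; more carefully, I would show the pruning argument guarantees that each edge of $F^2$ was tight for $y'$ exactly because of the phase-one duals on it, whose cost is matched by genuine phase-two duals along the shorter path, so the whole analysis can be carried out charging only against $\sum_S y_S$. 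Closing the loop gives $\tfrac12\sum_{e\in F^3}c(e) \le (2-\tfrac1k)\sum_{S\subseteq V} y_S$, and weak duality finishes the approximation guarantee.

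The main obstacle I anticipate is the forest/acyclicity argument for the graph of duals once $F^2$ is in play: unlike \texttt{AKR}, minimal violated sets here are not disjoint and can overlap and merge, the same edge may be covered by duals from \emph{both} phases, and the synthetic duals on $F^2$ are defined only along the selected (shorter) paths — so I must verify that the ``grown against'' relation still yields a well-defined simple graph, that Lemmas~\ref{dual-one}--\ref{dual-one-t} indeed cover the mixed cases (an $s$-dual against another first-phase dual, or a synthetic dual against a phase-one dual), and that a cycle in the graph of duals really does close up into a cycle in $F^3$ rather than merely a closed walk. Getting the double-counting exactly right when an edge lies in $F^1$ and carries duals from both phases, and making sure $N\le k$ holds for the enlarged dual set in each snapshot, is where the delicate bookkeeping lives.
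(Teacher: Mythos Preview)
Your overall architecture (time-sweep, graph of duals, forest bound, translation from $y'$ to $y$) is the same as the paper's, but there is a genuine gap in the counting that makes the argument as written fail.

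The problem is the claim ``$N\le k$ at every snapshot (each dual contains at least one terminal of a distinct pair)''. In this algorithm the duals that are \emph{grown against} each other live in different phases: a vertex $C^s$ of the graph of duals comes from phase one and a vertex $C^t$ from phase two (or is a synthetic dual on $F^2$). A ``snapshot'' here is not a single moment of one phase but an interval on the edges over which both phase-one and phase-two duals are overlaid; the same terminal pair $(s_i,t_i)$ contributes an $s$-dual \emph{and} a $t$-dual to the graph of duals. Hence the right bound is $N\le 2k$, not $N\le k$, and the forest inequality gives
\[
\sum_C |\Delta(C)\cap F^3|\;\le\;2(N-1)\;\le\;\Bigl(2-\tfrac{2}{2k}\Bigr)N\;=\;\Bigl(2-\tfrac{1}{k}\Bigr)N,
\]
which is exactly where the factor $2-\tfrac{1}{k}$ (rather than $2-\tfrac{2}{k}$) comes from. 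Your justification ``each dual contains at least one terminal of a distinct pair'' is false precisely because $C^s\ni s_i$ and $C^t\ni t_i$ are distinct duals arising from the same pair.

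A second, smaller gap: you lump $F^1$ and $F^2$ into one graph of duals and one forest argument, and then wave at Proposition~\ref{against-3} to absorb the synthetic duals into $\sum_S y_S$. The paper separates the two. On $F^1$ it builds $G_1$, contracts simultaneously growing duals into a single vertex with a \emph{growth speed} (which you omit, but is needed because minimal violated sets here are not disjoint), proves each component is a tree, and uses $|T_1|\le 2k$. On $F^2$ it builds $G_2$ and shows something stronger than ``forest'': every vertex has degree exactly $1$ (the synthetic duals are defined one-per-interval against each real dual), so $G_2$ is a matching. The translation from $y'$ to $y$ then uses Proposition~\ref{against-3} to guarantee at least one genuine dual for each block of synthetic duals, which is what makes the right-hand side grow by at least $\bigl(\tfrac{|T_2|}{2}+1\bigr)$ actual $y_S$-duals against a left-hand increase of $|T_2|$. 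Your sentence ``the synthetic duals are already paid for inside $\sum_S y_S$ and do not inflate the right-hand side beyond a constant factor'' is not an argument; the precise accounting here is where the second use of $|T_2|\le 2k$ enters, and without it the bound does not close.
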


\begin{proof}
Since all the edges in $F^1$ are tight, we have
$$
\frac{1}{2} \sum_{e \in F^1} c(e) = \sum_{e \in F^1} \sum_{\substack{S:e \in \delta^+(S)}} y_S.
$$

\noindent The edges in $F^2$ are also tight with respect to $y_S'$, i.e.
$$
\frac{1}{2} \sum_{e \in F^2} c(e) = \sum_{e \in F^2} \sum_{\substack{S:e \in \delta^+(S)}} y_S'.
$$

\noindent Then, we obtain
\begin{align*}
\frac{1}{2} \sum_{e \in F^3} c(e) &= \frac{1}{2} \sum_{e \in F^1} c(e) + \frac{1}{2} \sum_{e \in F^2} c(e) \\
&= \sum_{e \in F^1} \sum_{\substack{S:e \in \delta^+(S)}} y_S + \sum_{e \in F^2} \sum_{\substack{S:e \in \delta^+(S)}} y_S'.
\end{align*}

\noindent Thus, it suffices to show
\begin{equation}
\label{main_eq}
\sum_{e \in F^1} \sum_{\substack{S:e \in \delta^+(S)}} y_S + \sum_{e \in F^2} \sum_{\substack{S:e \in \delta^+(S)}} y_S' \leq \left(\frac{2k-1}{k}\right) \cdot \sum_{\substack{S \subseteq V}} y_S.
\end{equation}

We argue by providing a procedure for covering the edges in $F^3$ in several steps. We first make some definitions. Form the graph of duals $G_{1}=(V_{1},E_{1})$ with $V_{1}$ consisting of all the duals grown in the first and the second phase by the algorithm on $F^1$, and $E_{1}$ consisting of edges of the form $(C^s,C^t)$ if $C^s$ and $C^t$ have grown against each other on an edge in $T_1$. Modify this graph by contracting duals simultaneously grown on an edge into a single dual, and let the growth speed of this dual be the number of contracted duals. Consider a component $T_1$ of $G_1$ and let its vertices be $C_1,\hdots,C_{|T_1|}$ with the corresponding growth speeds $\sigma_1,\hdots,\sigma_{|T_1|}$. Let $\sigma_{max}$ be the largest of these values with the corresponding vertex $C_{max}$. A single step of the procedure covering some portion of the edges of $T_1$ is as follows. For all components $T_1$ of $G_1$, consider increasing all the duals defining $C_{max}$ by an $\epsilon>0$ together with the increments of the neighboring vertices so that they counter the portion of the corresponding edge of $F^1$ in the reverse direction (Recall that we select $\epsilon$ small enough so that the snapshot does not change). Continuing this process in the breadth-first search fashion, consider the increments of the vertices of $T_{1}$ so that all the edges we go over are covered by the same amount in both directions.
\begin{prop}
\label{tree}
$T_{1}$ is a tree.
\end{prop}

\begin{proof}
Take a maximal tree $T$ in $T_{1}$. Let the duals in $T$ be $C_1, \hdots, C_{|T|}$. Assume for a contradiction that there is an edge $e = (C_i,C_j)$ of $T_1$, which is not in $T$. Let $P$ be the path in $F^3$ between the terminals $C_i$ and $C_j$ originate from, which is implied by the path between $C_i$ and $C_j$ in $T$. The existence of $e$ implies that there is another path $P'$ in $F^3$ between the terminals $C_i$ and $C_j$ originate from. In particular, the edge in $F^3$ on which $e$ is defined is distinct from the edges of $P$ by the definition of the graph of duals. This induces a cycle in $F^3$, which is a contradiction.
\end{proof}

Given $C \in V_{1}$, we define $deg_1(C)$ as the graph-theoretic degree of $C$ in $G_{1}$. It is an immediate consequence of Lemma~\ref{dual-one} and Lemma~\ref{dual-one-t} that
\begin{cor}
\label{degree-cor}
For $C \in V_{1}$, $|\Delta(C) \cap F^1| = deg_1(C)$.
\end{cor}

We now make the analogous definitions for $F^2$. Form the graph of duals $G_{2}=(V_{2},E_{2})$ with $V_{2}$ consisting of the duals $y_S'$ defined on $F^2$, and $E_{2}$ consisting of edges of the form $(C^s,C^t)$ if $C^s$ and $C^t$ have grown against each other on an edge in $F^2$. Note that in this graph, the growth speeds of all the vertices are already $1$ by definition. Note also that $V_1$ and $V_2$ might have nonempty intersection. Thus, we extend the procedure described for $G_1$ above to the vertices in $V_2$ by making sure that
\begin{itemize}
\item their increments are compatible with the ones in $V_1$ in the same step,
\item if an increased dual in $V_2$ belongs to the set of duals that we have defined (rather than the ones actually grown by the algorithm), we increase the values of all such duals defined on the current interval together with the duals grown against them.
\end{itemize}

\noindent The second condition is enforced to make sure that there is at least one dual computed by the algorithm corresponding to the duals we have defined in $V_2$, i.e. we can use Proposition~\ref{against-3}.
\begin{prop}
\label{tree-2}
The edges in $V_{2}$ do not share any common vertex.
\end{prop}

\begin{proof}
By the definition of $y_S'$ on $F^2$, there is a distinct dual grown against each dual grown by the algorithm on an edge. Thus, a dual in $y_S'$ cannot be growing against another two duals.
\end{proof}

Given $C \in V_{2}$, we define $deg_2(C)$ as the graph-theoretic degree of $C$ in $G_{2}$. Combining Proposition~\ref{tree-2} with Lemma~\ref{dual-one} and Lemma~\ref{dual-one-t}, we have
\begin{cor}
\label{degree-cor-2}
For $C \in V_{2}$, $|\Delta(C) \cap F^2| = deg_2(C) = 1$.
\end{cor}

After performing a single step of the procedure for all the trees in $G_1$ together with all the duals in $G_2$ affected by their increments, we continue to cover the uncovered portion of the edges in $F^1$ and $F^2$ in the same fashion by recomputing $G_1$ and $G_2$ on the residual graphs. This process terminates since the set of duals is finite and we always select $\epsilon > 0$. We first assume that the uncovered part of $F^1$ is nonempty till the end of the procedure, and argue by induction on the number of steps of the procedure with this assumption. At the beginning of the first step, the values of all the dual values are $0$. Thus, the inequality (\ref{main_eq}) vacuously holds. Assume that it holds at the beginning of some step. Let $T_1$ be a tree of $G_1$ with $|T_1|$ vertices. By Corollary~\ref{degree-cor}, the degree of a vertex in $G_{1}'$ coincides with the number of edges the corresponding dual is incident to in $F^1$. Noting that we have $|T_1|-1$ edges, the amount of increase on the first term of the left hand side of (\ref{main_eq}) for $T_1$ is then
\begin{equation}
\label{1}
\epsilon \sigma_{max} \left(2(|T_1|-1)\right).
\end{equation}

\noindent Since there are $|T_1|$ vertices, the corresponding increase on the right hand side is
\begin{equation}
\label{2}
\epsilon \sigma_{max} \left(\frac{2k-1}{k}\right)|T_1|.
\end{equation}

\noindent On the other hand, we have
\begin{align*}
2(|T_1|-1) = \left(\frac{2(|T_1|-1)}{|T_1|}\right)|T_1|
&\leq \left(\frac{2(2k-1)}{2k}\right)|T_1| \\
&= \left(\frac{2k-1}{k}\right)|T_1|,
\end{align*}

\noindent where the inequality follows from the fact that the number of duals in both phases is at most $k$, i.e. $|T_1| \leq 2k$. Thus, (\ref{1}) is upper bounded by (\ref{2}).

Let $T_2 \subseteq V_2$ be the set of vertices whose values are increased due to the increments in $T_1$. By Corollary~\ref{degree-cor-2}, we have that the degrees of the duals in $T_2$ are all $1$, which is the same as the degrees in $F^2$. In particular, the number of edges in $T_{2}$ is $|T_{2}|/2$. The amount of increase on the second term of the left hand side of (\ref{main_eq}) is then
\begin{equation}
\label{3}
\epsilon \sigma_{max} \left(|T_2|\right).
\end{equation}

\noindent By Proposition~\ref{against-3}, there is at least one dual in $y_S$ for all the $|T_2|/2$ new duals we have defined in $y_S'$. Thus, the corresponding increase on the right hand side is at least
\begin{equation}
\label{4}
\epsilon \sigma_{max} \left(\frac{2k-1}{k}\right)\left(\frac{|T_2|}{2}+1\right).
\end{equation}

\noindent Similar to the previous inequality, we obtain
\begin{align*}
|T_2| = \left(\frac{|T_2|}{\frac{|T_2|}{2}+1}\right) \left(\frac{|T_2|}{2}+1\right)
&\leq \left(\frac{2k}{k+1}\right) \left(\frac{|T_2|}{2}+1\right) \\
&\leq \left(\frac{2k-1}{k}\right) \left(\frac{|T_2|}{2}+1\right),
\end{align*}

\noindent where the first inequality is due to the fact that $|T_2| \leq 2k$, and the second inequality follows since $k \geq 1$. Thus, (\ref{3}) is upper bounded by (\ref{4}). Overall, the inequality (\ref{main_eq}) remains valid at the beginning of the next step of the procedure.

If the uncovered part of $F^1$ becomes empty, for the rest of the procedure, we appropriately select at each step some $T_2 \subseteq V_2$ such that the edges in $F^2$ on which the edges in $T_2$ are defined are simultaneously covered by the algorithm, i.e. $T_2$ is implied by a snapshot of the algorithm. This ensures that $|T_2| \leq 2k$ and Proposition~\ref{against-3} holds for the duals in $T_2$. Mimicking the procedure defined for $F^1$, we increase the values of the duals in $T_2$ by an appropriate $\epsilon > 0$. Then, given a step, the amount of increase on the left hand side of (\ref{main_eq}) is upper bounded by the amount of increase on the right hand side via the exact same argument given above for $T_2$. Thus, the inequality (\ref{main_eq}) holds at the beginning of the next step of the procedure. This completes the induction and hence the proof.
\end{proof}
\section{Tight examples}
We give a tight example essentially putting a lower bound of $2-\frac{1}{k}$ for \texttt{AKR} and \texttt{KLS} in Figure~\ref{tight-AKR} on which the new algorithm finds a near-optimal solution. For this example, the set of duals grown by \texttt{KLS} is the same as that of \texttt{AKR}, and they both select all the edges of cost $1-\epsilon$ before the edges of cost $1/2$ become tight. This makes a total cost of $(2k-1)(1-\epsilon)$. The optimal solution on the other hand consists of all the edges of cost $1/2$ between the pairs of indices from $2$ to $k$ and the edge of cost $1-\epsilon$ between $s_1$ and $t_1$, with a total cost of $k-\epsilon$. In the first phase of the new algorithm, after the duals cover the edges of cost $1/2$ on the $s$-side, the number of duals grown on the edges of cost $1/2$ on the $t$-side becomes $k$. Thus, these edges are also covered within $1/(2k)$ unit of time before all the other edges are covered. Same thing happens in the second phase resulting in the selection of all edges of cost $1/2$, making a total cost of $k$.
\begin{figure}[!t]
\begin{center}
\includegraphics[width=.75\textwidth]{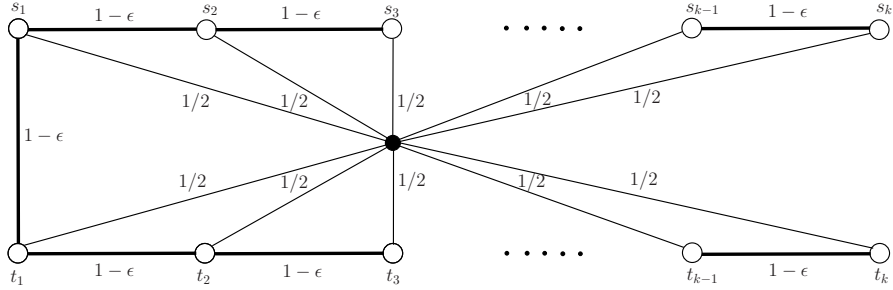}
\end{center}
\caption{A tight example for \texttt{AKR} and \texttt{KLS} on which the new algorithm finds a near-optimal solution}
\label{tight-AKR}
\end{figure}

\begin{figure}[!]
\begin{center}
\includegraphics[width=.75\textwidth]{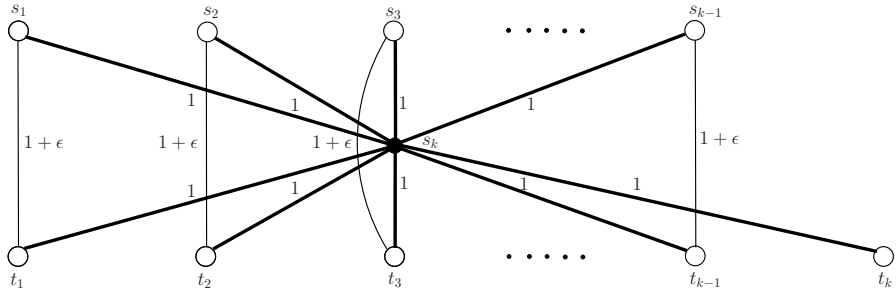}
\end{center}
\caption{A tight example for the new algorithm}
\label{tight}
\end{figure}

A tight example for the new algorithm is given in Figure~\ref{tight}, where the high degree dual is around $s_k$. In the first phase of the algorithm, the set of $k-1$ edges of cost $1$ between the set of terminals $\{s_1,\hdots,s_{k-1}\}$ and $s_k$ are covered in both directions since $s_k$ also grows. Due to this growth, the set of $k$ edges between $s_k$ and the $t$-terminals are also selected in forward direction. In the second phase, these edges are covered in the reverse direction. Thus, the total cost of the solution returned by the algorithm is $2k-1$. The direct edges of cost $1+\epsilon$ between $s_i$ and $t_i$ for $i=1,\hdots,k-1$ remain uncovered throughout the algorithm, which gives an optimal cost of $(k-1)(1+\epsilon)+1$ together with the edge $(s_k,t_k)$.

\section*{Acknowledgment}
We would like to thank David P. Williamson for answering questions on the classical primal-dual algorithm for Steiner forest during the early stages of our investigation. This work was supported by TUBITAK (Scientific and Technological Research Council of Turkey) under Project No. 112E192.

\bibliographystyle{plain}
\bibliography{reference}
\end{document}